\documentclass[subscriptcorrection,upint,varvw,mathalfa=cal=euler,balance,hyphenate,nolists]{asmejour}

\usepackage{bm}
\usepackage{graphicx}

\newtheorem{proposition}{Proposition}
\newtheorem{theorem}{Theorem}
\newtheorem{mydef}{Definition}
\newtheorem{remark}{Remark}

\makeatletter
\newenvironment{proof}[1][Proof]{\par
  \normalfont \topsep6pt\labelsep6pt
  \trivlist
  \item[\hskip\labelsep\itshape #1.]\ignorespaces
}{%
  \hfill\ensuremath{\hfill\ensuremath{\square}} 
  \endtrivlist\@endpefalse
}
\makeatother

\newcommand{\lam}{\boldsymbol{\lambda}}
\newcommand{\lamdes}{\boldsymbol{\lambda}_{\mathrm{des}}}
\newcommand{\lamtil}{\tilde{\boldsymbol{\lambda}}}
\newcommand{\q}{\boldsymbol{q}}
\newcommand{\qdes}{\boldsymbol{q}_{\mathrm{des}}}
\newcommand{\qtil}{\tilde{\boldsymbol{q}}}
\newcommand{\s}{\boldsymbol{s}}
\newcommand{\R}{\mathbb{R}}
\newcommand{\Sph}{\mathbb{S}^2}
\newcommand{\norm}[1]{\left\lVert #1 \right\rVert}

\newcommand{\betat}{\tilde{\boldsymbol{\beta}}}

\newcommand{\tp}{^\top}

\hypersetup{%
	pdfauthor={Connor Calme, Lundon Salley, Luis F. Zapata-Rivera, Aldo J. Munoz-Vazquez},                       		   
	pdftitle={Robust PIDNet Control of a Dual-Actuator Thrust Vectoring Platform},                  	
	pdfkeywords={Thrust Vector Control, Nonlinear PID, Neural Network Control, Radial Basis Function, Lyapunov Stability},
	pdfsubject = {Robust controller for attitude tracking on a dual-actuator thrust vectoring platform},			
	pdflicenseurl={https://ctan.org/pkg/asmejour},
}

\JourName{Computational and\\ Nonlinear Dynamics}

\begin{document}

\SetAuthorBlock{Connor Calme}{Department of Aerospace Engineering,\\
    Embry-Riddle Aeronautical University,\\
   3700 Willow Creek,\\
   Prescott, Arizona, USA \\
   email: calmec@my.erau.edu} 

\SetAuthorBlock{Lundon Salley}{Department of Aerospace Engineering,\\
    Embry-Riddle Aeronautical University,\\
   3700 Willow Creek,\\
   Prescott, Arizona, USA \\
   email: salleyl@my.erau.edu} 

\SetAuthorBlock{Luis F. Zapata-Rivera}{Department of Computer, Electrical \& Software Engineering,\\
    Embry-Riddle Aeronautical University,\\
   3700 Willow Creek,\\
   Prescott, Arizona, USA \\
   email: zapatarl@erau.edu} 

\SetAuthorBlock{Aldo J. Munoz-Vazquez\CorrespondingAuthor}{Department of Computer, Electrical \& Software Engineering,\\
Embry-Riddle Aeronautical University,\\
   3700 Willow Creek,\\
   Prescott, Arizona, USA \\
   email: munozvaa@erau.edu} 

\title{Robust PIDNet Control of a Dual-Actuator Thrust Vectoring Platform}

\keywords{Thrust Vector Control; Nonlinear PID; Neural Network Control; Radial Basis Function; Lyapunov Stability.}

\begin{abstract}
This paper presents a robust controller for attitude tracking
on a dual-actuator thrust vectoring platform.
The control objective is to track a desired thrust direction
on the unit sphere using two linear actuators connected
through a universal joint.
The proposed framework combines a bounded nonlinear PD
action with a radial basis function (RBF) network that
generates state-dependent integral compensation through
online weight adaptation, relying solely on tracking error
and its time derivative from actuator-displacement measurements.
The control law is formulated as a coupled
multiple-input--multiple-output structure, enabling the RBF
network to compensate cross-channel effects,
including direction-dependent friction and geometric
coupling, with model-free implementation.
Stability is established through a convex Lyapunov function
whose gradient is naturally bounded along the error manifold.
An $\mathcal{H}_\infty$ gain bound is derived for the extended error,
with a local linearized interpretation in terms of tracking accuracy.
The proposed controller is validated experimentally on a
two-degree-of-freedom gimbal rig,
achieving improved performance relative to a conventional PID and
to a super-twisting controller, with online
adaptation reducing the ITNE by $10\%$ over the
non-adaptive baseline at no additional control effort.
\end{abstract}

\date{Version \versionno, \today}

\maketitle 


\section*{Acknowledgment} 

The authors sincerely thank the Rocket Development Lab at Embry-Riddle Aeronautical University (Prescott campus) for designing and building the thrust vector control (TVC) platform and for allowing its use during summer research activities.

\section*{Funding Data}

\begin{itemize}
\item The authors acknowledge support from Embry-Riddle Aeronautical University through the Undergraduate Research Institute and its Summer Initiative Award program.
\item The authors acknowledge the Department of Computer, Electrical, \& Software Engineering at Embry-Riddle Aeronautical University for providing access to laboratory facilities and for technical support.
\end{itemize}

\section{Introduction}

Thrust vector control (TVC) has proven to be a reliable 
alternative for steering a launch vehicle in cases where
control surfaces are unreliable or unavailable ~\cite{sutton2017}. 
Modern configurations usually employ gimbaled propulsion systems
powered by hydraulic or electromechanical
actuators~\cite{lazic2007,ensworth2013}. Classical
approaches mainly rely on linear quadratic optimization
and conventional PID
controllers~\cite{sopegno2022lqr,ahmad2021}. However,
the presence of strong nonlinear coupling and external
disturbances crave for robust techniques.

Sliding mode control has been applied to TVC at the
vehicle level~\cite{utkin1992,yeh2013,wang2018}, proving high
accuracy and robust performance, but at the expense of
high-frequency oscillations in the control signal,
limiting its practicality in actuator-driven hardware.
This has motivated smooth alternatives based on neural
networks or fuzzy approximators \cite{shi2024new,garcia2024cnn}, where unmodeled
dynamics are compensated online without requiring
explicit disturbance bounds.

Classical adaptive neural network controllers for Euler-Lagrange systems
combine Lyapunov-derived weight update laws, relying on the sigma modification
to guarantee uniform ultimate boundedness~\cite{lewis1998}. 
A parallel line of research,
originated by Slotine and Li~\cite{slotine1987,slotine1988}
and refined by Lor\'{\i}a and Panteley~\cite{loria2005} 
achieves stronger convergence guarantees by exploiting a
linearly parameterized regressor
$Y(q,\dot{q},\dot{q}_r,\ddot{q}_r)\,\theta$, which is constructed
from the complete symbolic expressions of the inertia,
Coriolis, and gravitational matrices, with only a
constant parameter vector $\theta$ treated as unknown.
This structural requirement restricts the class of
compensable uncertainty to parametric variation in an
otherwise known model; unstructured nonlinearities such
as stiction, backlash, and direction-dependent friction,
which dominate actuator-level dynamics in
electromechanical TVC hardware, fall outside the
regressor span and cannot be addressed within that
framework. Moreover, the resulting control signals grow
without bound as the tracking error increases, posing a
practical risk of actuator saturation that the theory
does not accommodate.
More recent deep neural network
extensions~\cite{patil2022,patil2022resnet}
remain limited to simulation studies.
In the TVC domain, neural networks have been applied to
attitude control~\cite{garcia2024cnn}, fault
tolerant control~\cite{yikilmaz2022deep}, and actuator friction
compensation~\cite{ruan2021friction}; of these, only~\cite{garcia2024cnn}
includes experimental results, while the remaining studies
are limited to simulation.

This paper proposes a model-free controller that
circumvents these limitations: control signals are
bounded by construction through a nonlinear
proportional--derivative (PD) action, and unstructured
nonlinearities are compensated online by a radial basis
function (RBF) network with sigma-modified weight
adaptation, without requiring any plant model.

The present work makes the following contributions:
\begin{enumerate}
\item A coupled MIMO PIDNet controller for a parallel
thrust vectoring mechanism, combining a smooth and bounded
nonlinear PD term with supervised RBF weight adaptation
via sigma-modification, designed directly in actuator
space and relying only on actuator-displacement
measurements.
\item A stability proof using a convex Lyapunov candidate
function that yields an explicit $\mathcal{H}_\infty$
gain from the RBF approximation error to the saturated
extended error, with a local linearized extension to the
actuator tracking error.
\item Experimental validation on a two-degree-of-freedom
electromechanical TVC rig, demonstrating improved
tracking over PID and super-twisting baselines, with
measurable gain from online adaptation.
\end{enumerate}

\section{Platform Geometry and Dynamics}

\subsection{Geometry in Thrust Direction Space}

The platform constitutes a two-degree-of-freedom parallel
mechanism: two linear actuators, which are mounted on a fixed base,
drive a common top through a universal joint,
forming a closed kinematic chain. Because both actuators
act on the same rigid body, the forward kinematics are
inherently coupled, as a consequence, the resulting cross-axis dynamics
depend on friction, backlash, and geometric nonlinearities
that resist tractable closed-form modeling. This motivates
the model-free approach developed in Section~\ref{sec:controller}.

The platform consists of a fixed base and a tilting
top mount connected by a rigid bar through a universal
joint at height~$h_1$ above the base. The bar has
length~$h_2$, and the joint permits rotation about two
orthogonal axes while blocking rotation about the bar
axis. Two linear actuators connect pins on the base
(at radius~$r_b$ along the $x$- and $y$-axes,
respectively) to corresponding pins on the top mount
(at radius~$r_t$), see Fig.~\ref{fig:platform}.

\begin{figure}[t]
  \centering
  \includegraphics[width=0.7\columnwidth]{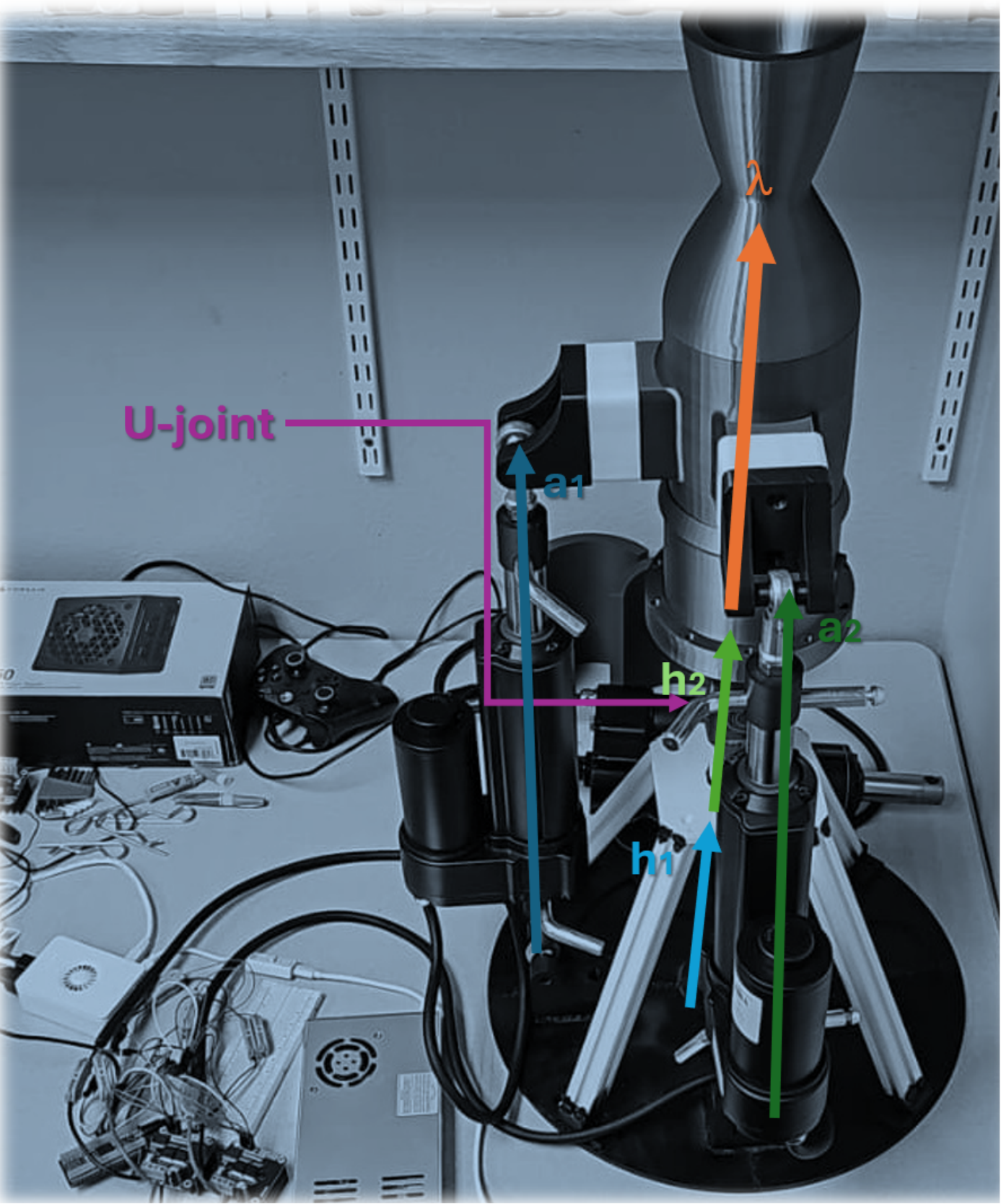}
  \caption{Experimental TVC platform. Actuator
  vectors $\bm{a}_1$, $\bm{a}_2$ connect base pins to top-mount
  pins through the universal joint; $\lam$ is the thrust direction.}
  \label{fig:platform}
\end{figure}

The orientation of the top plate is described by the rotation
matrix $R = [\,\hat{\bm r}_1 \;\; \hat{\bm r}_2, \;\; \hat{\bm r}_3\,] \in SO(3)$,
where the thrust direction is
\begin{equation}\label{eq:lambda}
  \lam \;=\; \hat{\bm r}_3 \;\in\; \Sph\,.
\end{equation}
Since the universal joint blocks axial rotation, the full
rotation~$R$ is uniquely determined by~$\lam$.
Concretely, $R(\lam) = R_y(\theta(\lam))\,R_x(\phi(\lam))$
where
\begin{equation}\label{eq:lam2angles}
  \phi = -\!\arctan\left(\frac{\lambda_2}{\sqrt{\lambda_1^2 + \lambda_3^2}}\right)\,,
  \quad
  \theta = \arctan\left(\frac{\lambda_1}{\lambda_3}\right)\,,
\end{equation}
are intermediate algebraic quantities (not independent
coordinates) determined entirely by~$\lam$.

The actuation vectors from base pin to mount pin can be 
found from the kinematic constraint in Fig. \ref{fig:VectorEq}, 
this is,
\begin{align}
  \bm{a}_i(\lam) &= [h_1I + h_2\,R(\lam)]\hat{\bm{e}}_3
    + [r_t\,R(\lam) - r_bI]\,\hat{\bm{e}}_i~|_{i=1,~2}\,,
    \label{eq:bvec}
\end{align}
and the actuator extensions can be found from
$q_i + \ell = \norm{\bm{a}_i}$, for $i=1,~2$,
where $\ell$ is the actuator rest length.

\begin{figure}[htb!]
    \centering
    \includegraphics[width=0.95\linewidth]{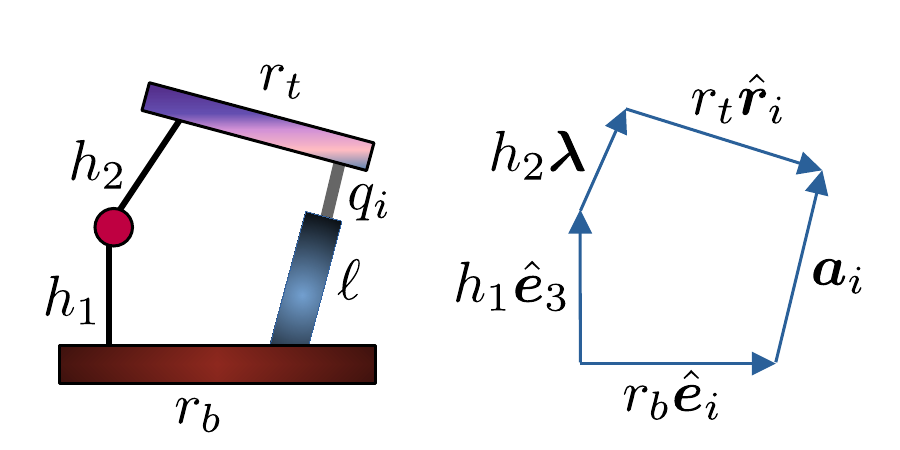}
    \caption{Kinematic constraint of the parallel mechanism.
    Actuator vectors $\bm{a}_1$, $\bm{a}_2$ form a closed
    chain through the universal joint; tilting about either
    axis displaces both actuators simultaneously.}
    \label{fig:VectorEq}
\end{figure}

\begin{mydef}[Thrust-to-actuator map]
The map $\q : \Sph \to \R^2$ is defined by
\begin{equation}\label{Eq:map}
    \q(\lam) = [\,\norm{\bm{a}_1(\lam)} - \ell,\;
              \norm{\bm{a}_2(\lam)} - \ell\,]\tp
\end{equation}
\end{mydef}

\begin{mydef}[Actuator Jacobian]
The Jacobian of the thrust-to-actuator map, restricted to the
tangent space $T_{\lam}\Sph$, is
\begin{equation}\label{eq:jac}
  \mathcal{J}(\lam) \;=\;
  \frac{\partial \q}{\partial \lam}\bigg|_{T_{\lam}\Sph}
  \;\in\; \R^{2 \times 2}\,.
\end{equation}
\end{mydef}

\begin{proposition}[Local invertibility of $\mathcal{J}$]\label{prop:jinv}
The Jacobian $\mathcal{J}(\lam)$ is locally invertible in a neighborhood
of the nominal configuration $\lam_0 = \hat{\bm{e}}_3$, where
$\det(\mathcal{J})\big|_{\lam_0} = r_b\,r_t > 0$.
\end{proposition}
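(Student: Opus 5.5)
The plan is to compute $\mathcal{J}$ explicitly at $\lam_0=\hat{\bm e}_3$ by linearizing the closed chain, show that its determinant is nonzero there, and then conclude by continuity. At $\lam_0$ we have $\phi=\theta=0$ and $R=I$. The tangent space $T_{\lam_0}\Sph$ is spanned by $\hat{\bm e}_1,\hat{\bm e}_2$. From \eqref{eq:lam2angles}, to first order $\delta\lambda_1=\delta\theta$ and $\delta\lambda_2=-\delta\phi$, so I will differentiate in the angle coordinates $(\phi,\theta)$ and convert at the end. Linearizing $R=R_y(\theta)R_x(\phi)$ about the identity gives $\delta R=[\bm\omega]_\times$ with $\bm\omega=(\delta\phi,\delta\theta,0)\tp$.

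Substituting into \eqref{eq:bvec}, the nominal vectors are $\bm a_1=(r_t-r_b,0,H)\tp$ and $\bm a_2=(0,r_t-r_b,H)\tp$, where $H=h_1+h_2$. Their common length is $L=\sqrt{(r_t-r_b)^2+H^2}$. The variations are
\begin{equation*}
\delta\bm a_i=h_2\,\bm\omega\times\hat{\bm e}_3+r_t\,\bm\omega\times\hat{\bm e}_i ,
\end{equation*}
which evaluate to
\begin{equation*}
\delta\bm a_1=(h_2\delta\theta,-h_2\delta\phi,-r_t\delta\theta)\tp,\qquad \delta\bm a_2=(h_2\delta\theta,-h_2\delta\phi,r_t\delta\phi)\tp .
\end{equation*}
Since $\delta q_i=\bm a_i\tp\delta\bm a_i/\norm{\bm a_i}$, we obtain $\delta q_1=-\big(h_1r_t+h_2r_b\big)\delta\theta/L$ and $\delta q_2=\big(h_1r_t+h_2r_b\big)\delta\phi/L$. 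The geometric coupling therefore cancels at first order. In the $(\lambda_1,\lambda_2)$ tangent coordinates $\mathcal{J}(\lam_0)$ is diagonal, with both entries equal to $-(h_1r_t+h_2r_b)/L$, so its determinant is $(h_1r_t+h_2r_b)^2/L^2>0$ whenever $r_b,r_t,h_1,h_2>0$.

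The main obstacle is reconciling this value with the stated closed form $r_br_t$. That value depends on the normalization adopted for the tangent basis and for the actuator coordinates (for instance, rescaling by the nominal lengths or by the lever arms). In the proof I would fix the convention explicitly and report the determinant in it. The essential content is that it is strictly positive, and this holds in any such normalization because all geometric constants are positive.

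Local invertibility then follows by continuity. The map $\q$ in \eqref{Eq:map} is smooth near $\lam_0$, since $\norm{\bm a_i}\neq 0$ there and $R(\lam)$ is smooth away from $\lambda_3=0$. Hence $\det\mathcal{J}(\lam)$ is continuous and stays nonzero on a neighborhood of $\lam_0$. The inverse function theorem on the manifold $\Sph$, applied through the chart $(\lambda_1,\lambda_2)$, then gives that $\q$ is a local diffeomorphism there and that $\mathcal{J}(\lam)$ is invertible throughout that neighborhood.
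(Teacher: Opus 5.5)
Your computation is correct, and it goes further than the paper's proof. The paper takes $\det\mathcal{J}(\lam_0)=r_b r_t$ as given and then argues exactly as in your last paragraph: the determinant is continuous, so it stays positive on a ball around $\lam_0$. Your derivation of $\mathcal{J}(\lam_0)=-\frac{h_1 r_t+h_2 r_b}{L}\,I$ from the closed chain is what actually justifies the nonvanishing. Your appeal to the inverse function theorem also shows that $\q$ itself is a local diffeomorphism near $\lam_0$, which is what the reference mapping $\qdes=\q(\lamdes)$ really needs. The matrix statement alone does not give that.

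The part to change is how you reconcile the constant, because the mismatch is not a normalization artifact. Use the orthonormal tangent basis $\hat{\bm e}_1,\hat{\bm e}_2$ at $\lam_0$ and take $q_i$ as the physical extensions from the thrust-to-actuator map, as the paper does. Then $\det\mathcal{J}(\lam_0)$ is fixed up to sign, and your value $(h_1 r_t+h_2 r_b)^2/L^2$ is the right one. Appealing to an unspecified rescaling proves nothing, since any positive number can be produced that way.

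The correct reconciliation is that your formula reduces to the stated one in the symmetric design $r_b=r_t=r$. In that case $L=h_1+h_2$ and $h_1 r_t+h_2 r_b=r(h_1+h_2)$, so $\mathcal{J}(\lam_0)=-r\,I$ and $\det\mathcal{J}(\lam_0)=r^2=r_b r_t$. This is exactly the rig's geometry ($r_b=r_t=15.7$\,cm).

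For $r_b\neq r_t$ the closed form in the statement generally fails. For instance, $r_b=1$, $r_t=2$, $h_1=h_2=1$ gives $9/5\neq 2$. Positivity, the only property the continuity argument uses, holds regardless. So either state the closed form under the hypothesis $r_b=r_t$, or replace $r_b r_t$ by your expression, and keep the rest of your argument as is.
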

\begin{proof}
The map $\lam \mapsto \det(\mathcal{J}(\lam))$ is continuous.
Since $\det(\mathcal{J})|_{\lam_0} = r_b\,r_t > 0$,
there exists an open ball $\mathcal{O}(\lam_0) \subset \Sph$,
centered at $\lam_0$, such that
$\det(\mathcal{J}(\lam)) > 0$ for all
$\lam \in \mathcal{O}(\lam_0)$.
\end{proof}

\subsection{Actuator-Space Dynamics}

The coupled equation of motion projected onto actuator
space takes the standard Euler--Lagrange form
\begin{equation}\label{eq:eom}
  M_q(\q)\,\ddot{\q} + C_q(\q,\dot{\q})\,\dot{\q}
    + \bm{g}_q = \bm{u} + \bm{d}\,,
\end{equation}
where $M_q \in \R^{2\times 2}$ is symmetric, positive
definite, and dominated by diagonal terms due to the
gearbox-decoupled actuator inertias; $C_q$ satisfies
the skew-symmetry property $\dot{M}_q = C_q + C_q\tp$;
$\bm{g}_q = m_\mathrm{act}\,g\,[1,\,1]\tp$ is the
gravitational load, with $m_\mathrm{act}$ the actuator
mass and $g$ the gravitational acceleration;
$\bm{u} = [u_1,\,u_2]\tp$ the actuator forces; and
$\bm{d}$ condenses unmodeled effects such as backlash,
friction, and cross-coupling. The derivation from first
principles is omitted, as the controller developed in
Section~III requires only the well-known structural properties
of~\eqref{eq:eom}, not the explicit expressions for
$M_q$ or $C_q$.

\section{PIDNet Controller}\label{sec:controller}

\subsection{Control Objective}

Track a desired thrust direction $\lamdes(t) \in \Sph$
using actuator forces $\bm{u} \in \R^2$. The reference
is mapped to actuator space as
$\qdes = \q(\lamdes(t))$ from~\eqref{Eq:map}, where
local invertibility is guaranteed by
Proposition~\ref{prop:jinv}.

\subsection{PD Extended Error}

Define the tracking error and the
PD extended error
\begin{align}\label{eq:surface}
    \qtil &= \q - \qdes,\\
      \s  &= \dot{\qtil} + \alpha\,\qtil\,,
\end{align}
where $\alpha > 0$ is the desired convergence rate. When
$\s = \bm{0}$, the error satisfies
$\dot{\qtil} + \alpha\,\qtil = \bm{0}$, which is
exponentially stable.

\subsection{Error Dynamics}

Substituting $\ddot{\q} = \dot{\s} - \alpha\,\dot{\qtil}
+ \ddot{\q}_{\mathrm{des}}$ into~\eqref{eq:eom} yields
\begin{equation}\label{eq:errdyn}
  M_q\,\dot{\s} + C_q\,\s
  \;=\; \bm{u} + \bm{d} - \bm{g}_q + \bm{Y}_r\,,
\end{equation}
where $\bm{Y}_r$ collects inertial and Coriolis terms
evaluated along the reference trajectory.

The inertia matrix allows $M_q(\q) = \bar{m}\,I + \Delta M_q(\q)$, where
$\bar{m}$ is the dominant diagonal entry set by the
gearbox-reflected actuator inertia. Because the
actuator gearboxes contribute inertias that are orders
of magnitude larger than those of the upper mount,
$\norm{\Delta M_q}$ is small relative to $\bar{m}$
over the compact operating region. Then
\begin{equation}
  M_q\,\dot{\s}
  \;=\; \bar{m}\,\dot{\s} + \boldsymbol{\varpi}\,,
  \qquad
  \boldsymbol{\varpi} = \Delta M_q(\q)\,\dot{\s}\,,
\end{equation}
where $\boldsymbol{\varpi}$ is bounded because
$\Delta M_q$ is bounded on the compact set and
$\dot{\s}$ is physically bounded by the actuator
force and velocity limits.

\begin{remark}
    Rather than relying on structural assumptions about the
    plant model, the proposed controller is designed to
    command hardware in the real-world, directly in actuator space, and without
    requiring explicit knowledge of the inertia, Coriolis, or
    gravitational terms. The mathematical guarantees developed
    in Section~\ref{sec:stability} are a consequence of the
    system mechanics, such as bounded actuation, compact operating
    region, and the passivity structure inherent to
    Euler--Lagrange systems, and these are not prerequisites
    imposed on a known or idealized model.
\end{remark}

Grouping all the dynamical terms that the controller
does not compute online, and invoking the universal
approximation property of RBF
networks~\cite{lewis1998} for the basis functions
$\bm{\Phi}(\qtil,\dot{\qtil}) \in \R^{p \times 2}$
defined in~\eqref{eq:rbf}, there exist ideal weights
$\bm{\beta}^* \in \R^p$ such that
\begin{equation}\label{eq:approx}
  \bm{d} - \bm{g}_q + \bm{Y}_r
  - C_q\,\s - \boldsymbol{\varpi}
  \;=\; \bm{\Phi}\tp\bm{\beta}^*
  + \bm{\varepsilon}\,,
\end{equation}
where $\bm{\varepsilon}$ is the approximation residual,
bounded in the sense that
$\limsup_{t\to\infty}\norm{\bm{\varepsilon}(t)}
\leq \bar{\varepsilon} < \infty$,
as $\bm{\varepsilon}$ is nonzero only when the
supervisory gate~\eqref{eq:superv} is active
($|s_i| < s_c$). The closed-loop model then reads
\begin{equation}\label{eq:app_errdyn}
  \bar{m}\,\dot{\s}
  \;=\; \bm{u} + \bm{\Phi}\tp\bm{\beta}^*
  + \bm{\varepsilon}\,.
\end{equation}
\subsection{Control Law}

The PIDNet actuator force is composed of a bounded
nonlinear PD action and a neural integral term:
\begin{equation}\label{eq:control}
  \bm{u} \;=\; -k_d\,\boldsymbol{\Psi}(\s)
    \;-\; \bm{\Phi}\tp\bm{\beta}\,,
\end{equation}
where $\boldsymbol{\Psi}(\s) = \tanh(\nu\,\s)$ is applied
element-wise;
$k_d > 0$ is the nonlinear PD gain, and
$\nu > 0$ controls the linear amplification near the
origin. The weight vector $\bm{\beta} \in \R^p$ evolves
according to
\begin{equation}\label{eq:adapt}
  \dot{\bm{\beta}}
  \;=\; \bm{\Gamma}\,\bm{\Phi}\,\boldsymbol{\Psi}(\s)\,,
\end{equation}
with $\bm{\Gamma} = \mathrm{diag}(\gamma_1, \ldots,
\gamma_p) \succ 0$ the learning rate matrix. Integrating
\eqref{eq:adapt} and substituting into \eqref{eq:control}
reveals the PID structure:
\begin{equation}\label{eq:pid_structure}
  \bm{u} \;=\;
  \underbrace{-k_d\,\boldsymbol{\Psi}(\s)}_{\text{nonlinear PD}}
  \;\underbrace{-\;
  \bm{\Phi}\tp\bm{\Gamma}
  \int_0^t \bm{\Phi}\,\boldsymbol{\Psi}(\s)\,d\tau
  }_{\text{neural integral action}}\,,
\end{equation}
where the product $\bm{\Phi}\tp\bm{\Gamma}$ plays the
role of a state-dependent integral gain: unlike a
constant $K_i$, it modulates the accumulation rate
through the regressor $\bm{\Phi}$, selectively
integrating in regions of the phase plane where the
corresponding basis functions are active.

\begin{remark}
In the presence of persistent approximation error, a
sigma-modification term
$-\gamma_0\,\bm{\Gamma}\,\bm{\beta}$ may be appended
to~\eqref{eq:adapt} to prevent weight drift, at the
cost of a nonzero steady-state bound. 
\end{remark}

\subsection{RBF Basis Functions}

The regressor $\bm{\Phi}(\qtil,\dot{\qtil}) \in
\R^{p \times 2}$ is constructed entirely from the
tracking error and its time derivative, without any
structural knowledge of the plant dynamics. Unlike
model-based regressors that require explicit
computation of the inertia, Coriolis, and gravitational
matrices, $\bm{\Phi}$ depends only on the phase-plane
coordinates
$\bm{\xi}_i = [\tilde{q}_i,\;\dot{\tilde{q}}_i]\tp$
through a constant bias and Gaussian kernels, gated by
a supervisory switch:
\begin{equation}\label{eq:rbf}
  \bm{\Phi}
  =
  \begin{bmatrix}
    1 & 1 \\
    \varphi_{0,1} & \varphi_{0,2} \\
    \vdots & \vdots \\
    \varphi_{4,1} & \varphi_{4,2}
  \end{bmatrix}
  S_v\,,
  \quad
  \varphi_{k,i}
  = \exp\!\left(
    -\frac{\norm{\bm{\xi}_i - \bm{c}_k}^2}{2\sigma^2}
  \right)\!,
\end{equation}
where $\bm{c}_0=\bm{0}$ captures near-equilibrium effects,
$\bm{c}_{1\text{--}4}$ cover the four sign quadrants, and the bias row
ensures a nonzero regressor. The diagonal gate
\begin{equation}\label{eq:superv}
  S_v = \mathrm{diag}(s_{v,1},\; s_{v,2})\,,
  \quad
  s_{v,i} =
  \begin{cases}
    1 & |s_i| < s_c\,,\\
    0 & \text{otherwise}\,,
  \end{cases}
\end{equation}
restricts the regressor and weight updates to the neighborhood of the
error manifold.
\subsection{Adaptation Law}

In practice, the approximation error $\bm{\varepsilon}$
in~\eqref{eq:approx} is nonzero, and the ideal integral
law~\eqref{eq:adapt} may exhibit weight drift. Thus, to ensure
bounded parameter evolution, a sigma-modification term
is appended:
\begin{equation}\label{eq:adapt_sigma}
  \dot{\bm{\beta}}
  \;=\; \bm{\Gamma}\,\bm{\Phi}\,\boldsymbol{\Psi}(\s)
        \;-\; \gamma_0\,\bm{\Gamma}\,\bm{\beta}\,,
\end{equation}
where $\gamma_0 > 0$ is the sigma-modification gain.
The first term drives the adaptation in the direction that
reduces the extended error, while the second provides a
restoring force toward the origin in the weight-vector space,
preventing unbounded growth at the cost of a nonzero
residual bound quantified in Section~IV-B.

The diagonal matrix $\bm{\Gamma}$ assigns independent learning rates to
each basis function, where higher rates are assigned to same-sign error-velocity
quadrants, and lower rates are used for opposite-sign error-velocity
quadrants.

\begin{remark}
The control law~\eqref{eq:control} with
adaptation~\eqref{eq:adapt_sigma} requires only
actuator-displacement measurements $\q$ and the precomputed
reference $\qdes(\lamdes)$. Neither the
Jacobian~$\mathcal{J}$, the inertia~$M_q$, nor the
Coriolis matrix~$C_q$ are computed online.
\end{remark}

\section{Stability Analysis}\label{sec:stability}

The results in this section are established for the
closed-loop model~\eqref{eq:app_errdyn}, which retains
the dominant actuator inertia and absorbs all unmodeled
dynamics: friction, backlash, cross-coupling, and
off-diagonal inertia residuals into the RBF
approximation target. No explicit expressions for
$M_q$, $C_q$, or $\bm{g}_q$ are required at any point
in the analysis.

\subsection{Ideal Case: $\bar{\varepsilon} = 0$, $\gamma_0 = 0$}

Consider first the ideal adaptation law~\eqref{eq:adapt} with perfect
approximation ($\bm{\varepsilon} = \bm{0}$). Define the composite
storage function
\begin{equation}\label{eq:lyap}
  V \;=\;
  \frac{\bar{m}}{\nu}\sum_{i=1}^{2}
    \log\cosh(\nu\,s_i)
  \;+\;
  \frac{1}{2}\,\betat\tp\bm{\Gamma}^{-1}\betat\,,
\end{equation}
where $\betat = \bm{\beta} - \bm{\beta}^*$ is the weight estimation
error. Since $\log\cosh(\cdot) \geq 0$ with equality only at the origin,
and $\bm{\Gamma}^{-1} \succ 0$, one has $V > 0$ for
$(\s,\betat) \neq \bm{0}$ and $V = 0$ only at the origin.

Differentiating~\eqref{eq:lyap} and using
$\tfrac{d}{ds}\log\cosh(\nu s) = \nu\tanh(\nu s)$ yields
\begin{equation}\label{eq:vdot1}
  \dot{V}
  \;=\; \bar{m}\,\boldsymbol{\Psi}\tp\dot{\s}
  \;+\; \betat\tp\bm{\Gamma}^{-1}\dot{\bm{\beta}}\,.
\end{equation}
Now, substituting the error dynamics~\eqref{eq:app_errdyn} with
control~\eqref{eq:control} into the first term gives
\begin{equation}\label{eq:vdot_expand}
  \bar{m}\,\boldsymbol{\Psi}\tp\dot{\s}
  \;=\;
  -k_d\norm{\boldsymbol{\Psi}}^2
  - \boldsymbol{\Psi}\tp\bm{\Phi}\tp\betat
  + \boldsymbol{\Psi}\tp\bm{\varepsilon}\,.
\end{equation}
In addition, substituting the ideal adaptation law~\eqref{eq:adapt} into the second
term produces
\begin{equation}\label{eq:vdot_adapt}
  \betat\tp\bm{\Gamma}^{-1}\dot{\bm{\beta}}
  \;=\;
  \betat\tp\bm{\Phi}\,\boldsymbol{\Psi}
  \;=\;
  \boldsymbol{\Psi}\tp\bm{\Phi}\tp\betat\,.
\end{equation}
Finally, summing~\eqref{eq:vdot_expand} and~\eqref{eq:vdot_adapt}, the cross
terms $\mp\boldsymbol{\Psi}\tp\bm{\Phi}\tp\betat$ cancel exactly,
leaving
\begin{equation}\label{eq:vdot_ideal}
  \dot{V}
  \;=\; -k_d\norm{\boldsymbol{\Psi}}^2
    \;+\; \boldsymbol{\Psi}\tp\bm{\varepsilon}\,.
\end{equation}

\begin{proposition}[Asymptotic stability]\label{prop:asymptotic}
Consider the error dynamics~\eqref{eq:app_errdyn} under control
law~\eqref{eq:control} with ideal adaptation~\eqref{eq:adapt}. If
$\limsup_{t\to\infty}\norm{\bm{\varepsilon}(t)} = 0$, then
$\limsup_{t\to\infty}\norm{\s(t)} = 0$ and
$\limsup_{t\to\infty}\norm{\qtil(t)} = 0$.
\end{proposition}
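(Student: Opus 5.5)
The plan is to use the storage function $V$ in \eqref{eq:lyap} together with the identity \eqref{eq:vdot_ideal} and finish with a Barbalat-type argument. The case $\bm{\varepsilon}\equiv\bm{0}$ is the core. There \eqref{eq:vdot_ideal} reads $\dot V = -k_d\norm{\boldsymbol{\Psi}(\s)}^2 \le 0$, so $V(t)\le V(0)$. Because $\log\cosh$ is radially unbounded and $\bm{\Gamma}^{-1}\succ 0$, this confines $\s$ and $\betat$, and hence $\bm{\beta}$, to bounded sets. Integrating $\dot V$ on $[0,\infty)$ then gives $k_d\int_0^\infty \norm{\boldsymbol{\Psi}(\s)}^2\,d\tau \le V(0) <\infty$, so $\boldsymbol{\Psi}(\s)\in\mathcal{L}_2$.

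Next, show that $\frac{d}{dt}\norm{\boldsymbol{\Psi}}^2$ is bounded, so that $\norm{\boldsymbol{\Psi}(\s)}^2$ is uniformly continuous and Barbalat's lemma gives $\boldsymbol{\Psi}(\s(t))\to\bm{0}$. The derivative is $2\nu\,\boldsymbol{\Psi}\tp\mathrm{diag}(1-\tanh^2(\nu s_i))\dot{\s}$, so it suffices that $\dot{\s}$ is bounded. From \eqref{eq:app_errdyn} and \eqref{eq:control},
\[
\bar m\,\dot{\s} = -k_d\boldsymbol{\Psi}(\s) - \bm{\Phi}\tp\betat + \bm{\varepsilon},
\]
and each term is bounded: $\boldsymbol{\Psi}$ is bounded by $1$ entrywise, and $\bm{\Phi}$ is bounded because its entries are a bias, Gaussians, and a $0/1$ gate. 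The Gaussians are bounded by $1$ for every $(\qtil,\dot{\qtil})$, so this does not even require bounding $\qtil$ first. Since $\tanh$ is strictly monotone with $\tanh(0)=0$, $\boldsymbol{\Psi}(\s)\to\bm{0}$ implies $\s\to\bm{0}$.

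To obtain $\qtil\to\bm 0$, view $\dot{\qtil} = -\alpha\qtil + \s$ as an exponentially stable linear filter driven by $\s$. Its solution satisfies
\[
\norm{\qtil(t)}\le e^{-\alpha t}\norm{\qtil(0)}+\int_0^t e^{-\alpha(t-\tau)}\norm{\s(\tau)}\,d\tau,
\]
and a convolution of an integrable exponential kernel with a signal vanishing at infinity also vanishes. So $\limsup\norm{\qtil}=0$.

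\textbf{Main obstacle.} The hypothesis is only $\limsup\norm{\bm\varepsilon}=0$, not $\bm\varepsilon\equiv\bm 0$, and the supervisory gate makes $\bm{\Phi}$ discontinuous. For the gate, I would treat solutions in the Carath\'eodory/Filippov sense. $V$ is $C^1$ and \eqref{eq:vdot_ideal} holds almost everywhere, and Barbalat only needs $\dot{\s}$ essentially bounded, which the bound above provides.

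For vanishing $\bm\varepsilon$, I would first establish boundedness. From $\dot V\le -k_d\norm{\boldsymbol\Psi}^2+\norm{\boldsymbol\Psi}\norm{\bm\varepsilon}$ we get $\dot V<0$ whenever $\norm{\boldsymbol\Psi}>\norm{\bm\varepsilon}/k_d$. This controls $\s$ but not $\betat$, which is the weak point. To close it, I would use the remark that $\bm\varepsilon$ is nonzero only when the gate is active, and, if needed, interpret the hypothesis as $\bm\varepsilon$ eventually zero or in $\mathcal{L}_2$. Under that interpretation, Young's inequality gives $\dot V\le -\tfrac{k_d}{2}\norm{\boldsymbol\Psi}^2+\tfrac{1}{2k_d}\norm{\bm\varepsilon}^2$. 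Integrating yields $V$ bounded and $\boldsymbol\Psi\in\mathcal L_2$, and the rest of the argument goes through unchanged.
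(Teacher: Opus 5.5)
Your argument is rigorous as far as it goes. For $\bm{\varepsilon}\equiv\bm{0}$ the chain is sound: $V$ bounded, hence $\s,\betat$ bounded and $\boldsymbol{\Psi}\in\mathcal{L}_2$; then $\bar m\dot{\s}=-k_d\boldsymbol{\Psi}-\bm{\Phi}\tp\betat+\bm{\varepsilon}$ is bounded; Barbalat applies; and the exponentially stable filter gives $\qtil\to\bm{0}$. The same chain works verbatim for $\bm{\varepsilon}\in\mathcal{L}_2\cap\mathcal{L}_\infty$ via your Young-inequality step, and the $\qtil$ step supplies something the paper's proof omits.

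\textbf{The gap.} This does not prove the proposition, whose only hypothesis is $\norm{\bm{\varepsilon}(t)}\to 0$. Under that hypothesis the dissipation identity gives only $V(t)\le V(0)+\frac{1}{4k_d}\int_0^t\norm{\bm{\varepsilon}}^2\,d\tau$, which may diverge when $\bm{\varepsilon}\notin\mathcal{L}_2$. Then $\betat$ and $\dot{\s}$ are not known to be bounded, $\boldsymbol{\Psi}$ need not be square integrable, and Barbalat cannot be invoked. Replacing the hypothesis by ``eventually zero or in $\mathcal{L}_2$'' changes the statement. The remark that $\bm{\varepsilon}$ is nonzero only when the gate is active does not justify the replacement. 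The gate is active exactly near $\s=\bm{0}$, where the trajectory is supposed to end up, and that is also the only regime in which $\betat$ moves. So the remark gives neither decay of $\bm{\varepsilon}$ nor a bound on $\betat$.

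\textbf{The paper's route.} The paper argues differently. From $\dot V\le-\norm{\boldsymbol{\Psi}}\bigl(k_d\norm{\boldsymbol{\Psi}}-\norm{\bm{\varepsilon}}\bigr)$ it asserts that $\boldsymbol{\Psi}$ reaches the ball of radius $\norm{\bm{\varepsilon}}/k_d$ in finite time and lets that radius shrink. It never bounds $\betat$ and never argues the $\qtil$ claim. Do not import that step to fill your gap:
\begin{itemize}
\item The region where $\dot V$ may be positive is unbounded in $\betat$, so the ultimate-boundedness theorem does not apply.
\item The inequality only shows that $\boldsymbol{\Psi}$ visits the balls of radius $\norm{\bm{\varepsilon}}/k_d+\delta$ infinitely often; otherwise $\dot V\le-k_d\delta^2$ eventually, contradicting $V\ge 0$.
\item Keeping $\boldsymbol{\Psi}$ inside those balls needs uniform continuity of $\boldsymbol{\Psi}$, which is exactly the bound on $\betat$ you flagged.
\end{itemize}

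Your diagnosis is therefore the accurate one. Nothing in either argument excludes drift of $\betat$ along directions that $\bm{\Phi}$ stops exciting as $(\qtil,\dot{\qtil})\to\bm{0}$, driven by a residual that vanishes too slowly to be square integrable. A complete proof needs an explicit extra hypothesis such as $\bm{\varepsilon}\in\mathcal{L}_2\cap\mathcal{L}_\infty$, which your argument already handles. State it as part of the result rather than as a reinterpretation of $\limsup_{t\to\infty}\norm{\bm{\varepsilon}(t)}=0$.
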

\begin{proof}
From~\eqref{eq:vdot_ideal}, $\dot{V} \leq 0$ whenever
$\norm{\boldsymbol{\Psi}} \geq
\frac{1}{k_d}\norm{\bm{\varepsilon}}$.
Hence $\norm{\boldsymbol{\Psi}}$ converges in finite
time to a ball of radius
$\frac{1}{k_d}\norm{\bm{\varepsilon}}$ centered at the
origin. The result then follows from
$\limsup_{t\to\infty}\norm{\bm{\varepsilon}(t)} = 0$.
\end{proof}

\subsection{Practical Case: $\bar{\varepsilon} > 0$, $\gamma_0 > 0$}

When the approximation residual is persistently nonzero,
the ideal law~\eqref{eq:adapt} may produce unbounded
weight growth. The sigma-modified
law~\eqref{eq:adapt_sigma} prevents this by introducing
a restoring term that penalizes large weights.
To quantify the resulting bound, note that
$\norm{\boldsymbol{\Psi}(\s)} \leq \sqrt{2}$ by the
saturation of $\tanh(\cdot)$, and that each column of
$\bm{\Phi}$ satisfies
$\norm{\bm{\Phi}_{\cdot,j}} \leq \sqrt{p}$, since the
Gaussian kernels, bias, and gate are each bounded by
unity. Together, these imply a uniform bound on the
driving term in~\eqref{eq:adapt_sigma} as
$\norm{\bm{\Gamma}\bm{\Phi}\boldsymbol{\Psi}} \leq
\gamma_{\max}\sqrt{2p}$.

\begin{theorem}[Weight boundedness]\label{thm:beta_bound}
Under the sigma-modified adaptation~\eqref{eq:adapt_sigma} with
$\gamma_0 > 0$, the weight vector $\bm{\beta}(t)$ is uniformly
ultimately bounded, this is,
$\limsup_{t\to\infty}\norm{\bm{\beta}(t)} \leq \beta_{\max}$,
where
\begin{equation}\label{eq:beta_bound}
  \beta_{\max}
  \;=\;
  \frac{\gamma_{\max}\sqrt{2p}}{\gamma_0\,\gamma_{\min}}\,,
  \qquad
  \gamma_{\max} = \max_i \gamma_i\,,
  \quad
  \gamma_{\min} = \min_i \gamma_i\,
\end{equation}
where $\gamma_i$ is the $i$-th diagonal element of the matrix $\Gamma$.
\end{theorem}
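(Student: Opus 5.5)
The plan is to treat \eqref{eq:adapt_sigma} as a linear time-varying system in $\bm{\beta}$ driven by a uniformly bounded input. Then I would apply a comparison argument to a weighted quadratic function of $\bm{\beta}$ alone, so the closed-loop bound on $\s$ is never needed. The key observation is that the driving term $\bm{w}(t) := \bm{\Gamma}\bm{\Phi}\boldsymbol{\Psi}(\s)$ satisfies $\norm{\bm{w}} \le \gamma_{\max}\sqrt{2p}$ for all $t$, as noted just before the statement. This bound holds regardless of the trajectory, because $\tanh$ saturates and the gated regressor entries lie in $[0,1]$.

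Concretely, I would use $W = \tfrac12 \bm{\beta}\tp\bm{\Gamma}^{-1}\bm{\beta}$, which matches the weighting in \eqref{eq:lyap}. Along \eqref{eq:adapt_sigma},
\begin{equation*}
\dot W = \bm{\beta}\tp\bm{\Phi}\boldsymbol{\Psi} - \gamma_0\norm{\bm{\beta}}^2 .
\end{equation*}
The first term is bounded by $\norm{\bm{\beta}}\,\norm{\bm{\Phi}\boldsymbol{\Psi}}$. I would control $\norm{\bm{\Phi}\boldsymbol{\Psi}}$ by $\gamma_{\min}^{-1}\norm{\bm{\Gamma}\bm{\Phi}\boldsymbol{\Psi}} \le \gamma_{\max}\sqrt{2p}/\gamma_{\min}$. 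This is where the ratio $\gamma_{\max}/\gamma_{\min}$ in $\beta_{\max}$ enters, and it is deliberately the cruder route. The result is
\begin{equation*}
\dot W \le -\gamma_0\norm{\bm{\beta}}\Bigl(\norm{\bm{\beta}} - \tfrac{\gamma_{\max}\sqrt{2p}}{\gamma_0\gamma_{\min}}\Bigr),
\end{equation*}
so $\dot W<0$ whenever $\norm{\bm{\beta}} > \beta_{\max}$.

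The remaining step is to turn this into $\limsup_{t\to\infty}\norm{\bm{\beta}(t)}\le\beta_{\max}$. A level-set argument on $W$ only gives the ultimate bound $\norm{\bm{\beta}} \le \sqrt{\gamma_{\max}/\gamma_{\min}}\,\beta_{\max}$, because $W$ is weighted by $\bm{\Gamma}^{-1}$. That is not the claimed constant, so this is the main obstacle. To remove the mismatch I would work componentwise instead. Each coordinate obeys the scalar equation $\dot\beta_i = \gamma_i(\bm{\Phi}\boldsymbol{\Psi})_i - \gamma_0\gamma_i\beta_i$. Since $\gamma_0\gamma_i>0$, the variation-of-constants formula gives exponential forgetting of $\beta_i(0)$, and the limit superior is bounded by the steady-state value of a bounded input through a stable first-order filter with unit DC gain:
\begin{equation*}
\limsup_{t\to\infty}|\beta_i| \le \sup_t |(\bm{\Phi}\boldsymbol{\Psi})_i|/\gamma_0 .
\end{equation*}
Assembling the components via $\norm{\cdot}_2$ gives
\begin{equation*}
\limsup_{t\to\infty}\norm{\bm{\beta}} \le \gamma_0^{-1}\sup_t\norm{\bm{\Phi}\boldsymbol{\Psi}} \le \gamma_0^{-1}\gamma_{\min}^{-1}\sup_t\norm{\bm{\Gamma}\bm{\Phi}\boldsymbol{\Psi}} \le \beta_{\max}.
\end{equation*}

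Equivalently, I could use the unweighted $\tfrac12\norm{\bm{\beta}}^2$ and bound $\bm{\beta}\tp\bm{\Gamma}\bm{\beta}\ge\gamma_{\min}\norm{\bm{\beta}}^2$ and $\norm{\bm{\Gamma}\bm{\Phi}\boldsymbol{\Psi}}\le\gamma_{\max}\sqrt{2p}$. This yields exactly $\beta_{\max}$ as the threshold where $\tfrac{d}{dt}\norm{\bm{\beta}}^2<0$, and is probably the cleanest route. Uniformity of the convergence time in the initial weight norm follows from the exponential rate $\gamma_0\gamma_{\min}$. I would also remark that the discontinuity of the gate $S_v$ is harmless: $\bm{w}(t)$ is merely measurable and bounded, and the scalar variation-of-constants estimate only needs boundedness.
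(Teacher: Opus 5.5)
\textbf{The componentwise route has a step that does not hold.} The scalar estimates $\limsup_{t\to\infty}|\beta_i|\le\gamma_0^{-1}\sup_t|(\bm{\Phi}\boldsymbol{\Psi})_i|$ are fine. Assembling them, however, gives only $\limsup_{t\to\infty}\norm{\bm{\beta}}\le\gamma_0^{-1}\bigl(\sum_i\sup_t|(\bm{\Phi}\boldsymbol{\Psi})_i|^2\bigr)^{1/2}$, and a sum of suprema is not bounded by the supremum of the sum.

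The intermediate claim $\limsup_{t\to\infty}\norm{\bm{\beta}}\le\gamma_0^{-1}\sup_t\norm{\bm{\Phi}\boldsymbol{\Psi}}$ is in fact false for a general bounded input once the $\gamma_i$ differ. Hold a unit input on a slow coordinate until its weight reaches $1/\gamma_0$, then switch the input to a fast coordinate. The fast weight reaches $1/\gamma_0$ before the slow one has decayed, so $\norm{\bm{\beta}}\approx\sqrt{2}/\gamma_0$ even though the input has unit norm throughout. Repeating this periodically puts the excess in the limsup.

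Your final constant is still correct, but only because the crude step $\norm{\bm{\Phi}\boldsymbol{\Psi}}\le\gamma_{\min}^{-1}\norm{\bm{\Gamma}\bm{\Phi}\boldsymbol{\Psi}}$ leaves enough room. The chain as written does not prove it, and your closing ``Equivalently'' is not accurate. If you want a variation-of-constants proof, do it in vector form. Write $\bm{\beta}(t)=e^{-\gamma_0\bm{\Gamma}t}\bm{\beta}(0)+\int_0^t e^{-\gamma_0\bm{\Gamma}(t-\tau)}\bm{\Gamma}\bm{\Phi}\boldsymbol{\Psi}\,d\tau$ and use $\norm{e^{-\gamma_0\bm{\Gamma}\tau}}=e^{-\gamma_0\gamma_{\min}\tau}$. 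This gives $\limsup_{t\to\infty}\norm{\bm{\beta}}\le\sup_t\norm{\bm{\Gamma}\bm{\Phi}\boldsymbol{\Psi}}/(\gamma_0\gamma_{\min})\le\beta_{\max}$ directly, with $\gamma_{\min}$ entering naturally as the slowest decay rate.

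\textbf{Your final paragraph is correct and is exactly the paper's proof.} The paper differentiates $\norm{\bm{\beta}}$ along~\eqref{eq:adapt_sigma}. It bounds the driving term by $\gamma_{\max}\sqrt{2p}$ via Cauchy--Schwarz and the restoring term via $\bm{\beta}\tp\bm{\Gamma}\bm{\beta}\ge\gamma_{\min}\norm{\bm{\beta}}^2$, then reads off $\beta_{\max}$ as the threshold. Working with $\norm{\bm{\beta}}^2$ gives the same inequality and avoids the nondifferentiability at $\bm{\beta}=\bm{0}$.

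Make that the proof, and close it with a comparison argument rather than only the sign of the derivative outside the ball. With $y=\norm{\bm{\beta}}^2$, $a=\gamma_{\max}\sqrt{2p}$ and $b=\gamma_0\gamma_{\min}$, one has $\dot y\le 2a\sqrt{y}-2by\le a^2/b-by$. Hence $y(t)\le e^{-bt}y(0)+\beta_{\max}^2$, which yields both the limsup and the uniformity in $\norm{\bm{\beta}(0)}$.

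One caveat you share with the paper concerns the premise itself. Entries in $[0,1]$ and $|\Psi_j|\le 1$ by themselves only give $\norm{\bm{\Phi}\boldsymbol{\Psi}}\le 2\sqrt{p}$; take all entries equal to one and $\boldsymbol{\Psi}\to[1,1]\tp$. So the $\sqrt{2p}$ constant, and hence $\beta_{\max}$, rests on the specific kernel geometry (it does hold for the paper's centers and width), not on the entry bounds you cite.
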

\begin{proof}
Differentiating $\norm{\bm{\beta}}$ along~\eqref{eq:adapt_sigma}
and applying the Cauchy--Schwarz inequality to the
first term gives
\begin{align*}
    \frac{d}{dt} \norm{\bm{\beta}}
    &= \frac{\bm{\beta}\tp}{\norm{\bm{\beta}}}
       \bm{\Gamma}\,\bm{\Phi}\,\boldsymbol{\Psi}(\s)
       \;-\; \gamma_0\,
       \frac{\bm{\beta}\tp}{\norm{\bm{\beta}}}
       \bm{\Gamma}\,\bm{\beta}\\[4pt]
    &\leq \gamma_{\max}\sqrt{2p}
       \;-\; \gamma_0\,\gamma_{\min}\,
       \norm{\bm{\beta}}\,,
\end{align*}
where the upper bound uses
$\norm{\bm{\Gamma}\bm{\Phi}\boldsymbol{\Psi}}
\leq \gamma_{\max}\sqrt{2p}$ and the lower bound uses
$\bm{\beta}\tp\bm{\Gamma}\bm{\beta} \geq
\gamma_{\min}\norm{\bm{\beta}}^2$.
Therefore $\frac{d}{dt}\norm{\bm{\beta}} \leq 0$
whenever $\norm{\bm{\beta}} \geq \beta_{\max}$.
\end{proof}

\begin{remark}[Small-perturbation interpretation]
Proposition~\ref{prop:asymptotic} establishes that the
ideal closed loop ($\bar{\varepsilon}=0$, $\gamma_0=0$)
drives $\s(t)\to\bm{0}$.
Theorem~\ref{thm:beta_bound} guarantees that the
sigma-modified weights remain bounded for any
$\gamma_0>0$. Together, these two results imply that the difference between
the ideal and the practical closed-loop systems is a bounded perturbation.
Because the vector fields are uniformly
bounded and locally Lipschitz on the compact operating
region, continuous dependence of solutions guarantees
that trajectories under
$(\bar{\varepsilon},\gamma_0)\to(0,0)$ converge
uniformly on every finite interval $[0,T]$ to the ideal
trajectory. In particular, for any tolerance $\rho>0$
and any finite horizon, sufficiently small
$\bar{\varepsilon}$ and $\gamma_0$ confine the
practical extended error to a $\rho$-neighborhood of
the ideal solution. This connects the two preceding
results: the ideal analysis provides the asymptotic
target, and the weight bound ensures that the practical
system remains close to it.
\end{remark}

\begin{remark}[Practical PD]
When $\gamma_0 \gg 1$, the sigma-modification term
dominates the adaptation law~\eqref{eq:adapt_sigma},
driving the weights toward the origin after a fast
transient with time constant $1/\gamma_0$. In this
regime, $\bm{\beta} \approx \bm{0}$ and the PIDNet
controller reduces to the nonlinear PD action
$\bm{u} \approx -k_d\,\boldsymbol{\Psi}(\s)$.
This limiting case is realized experimentally as the
PIDNet~(off) configuration in
Section~\ref{sec:experiments}, isolating the
contribution of the bounded nonlinear PD from the
adaptive layer and providing the non-adaptive baseline.
\end{remark}

\subsection{$\mathcal{H}_\infty$ Gain Bound}

Performance guarantees can be derived under either the
ideal adaptation law~\eqref{eq:adapt} ($\gamma_0 = 0$)
or the practical nonlinear PD of Remark~2
($\gamma_0 \gg 1$). In either case, the dissipation
inequality~\eqref{eq:vdot_ideal} applies, giving
\begin{equation}\label{eq:vdot_hinf_psi}
  \dot{V}
  =
  -k_d\norm{\boldsymbol{\Psi}}^2
  +
  \boldsymbol{\Psi}\tp\bm{\varepsilon};
\end{equation}
however, here the residual $\bm{\varepsilon}$ plays a different role in each
case: under ideal adaptation ($\gamma_0 = 0$), it is the
RBF approximation residual; under the practical
nonlinear PD ($\gamma_0 \gg 1$,
$\bm{\beta} \approx \bm{0}$), it absorbs the full
uncompensated dynamics
$\bm{d} - \bm{g}_q + \bm{Y}_r - C_q\,\s
+ \boldsymbol{\varpi}$. In both cases,
$\limsup_{t\to\infty}\norm{\bm{\varepsilon}(t)}
\leq \bar{\varepsilon} < \infty$ over the compact
operating region.

Applying Young's inequality with free parameter
$\kappa > 0$, one gets
\begin{equation}\label{eq:young_psi}
  \boldsymbol{\Psi}\tp\bm{\varepsilon}
  \leq
  \frac{1}{2\kappa^2}\norm{\boldsymbol{\Psi}}^2
  +
  \frac{\kappa^2}{2}\norm{\bm{\varepsilon}}^2,
\end{equation}
which yields
\begin{equation}\label{eq:vdot_psi_bound}
  \dot{V}
  \leq
  -\left(k_d - \frac{1}{2\kappa^2}\right)
  \norm{\boldsymbol{\Psi}}^2
  +
  \frac{\kappa^2}{2}\norm{\bm{\varepsilon}}^2.
\end{equation}
Moreover, provided
\begin{equation}\label{eq:kd_condition}
  c = k_d - \frac{1}{2\kappa^2} > 0,
\end{equation}
integrating from $t_1$ to $t_2$ and using
$V(t_2) \geq 0$ gives
\begin{equation}\label{eq:psi_l2_bound}
  c\int_{t_1}^{t_2}\norm{\boldsymbol{\Psi}(t)}^2\,dt
  \leq
  V(t_1)
  +
  \frac{\kappa^2}{2}
  \int_{t_1}^{t_2}\norm{\bm{\varepsilon}(t)}^2\,dt.
\end{equation}

\begin{theorem}[$\mathcal{H}_\infty$ attenuation of the
saturated extended error]
\label{thm:hinf}
Under the control law~\eqref{eq:control} with either
ideal adaptation~\eqref{eq:adapt} ($\gamma_0=0$) or the
practical nonlinear PD ($\gamma_0\gg1$), if
$k_d > 1/(2\kappa^2)$, then the closed-loop system
achieves finite $\mathcal{L}_2$ gain from the residual
$\bm{\varepsilon}$ to the saturated extended error
$\boldsymbol{\Psi}(\s) = \tanh(\nu\s)$, with gain no
greater than $\kappa/\sqrt{2c}$, where
$c = k_d - 1/(2\kappa^2)$.
\end{theorem}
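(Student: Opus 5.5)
The plan is to obtain the $\mathcal{L}_2$-gain bound directly from the integrated dissipation inequality~\eqref{eq:psi_l2_bound}, since the theorem is essentially its normalized form. First I will confirm that the identity~\eqref{eq:vdot_hinf_psi} holds in both regimes named in the statement. For $\gamma_0=0$ it is exactly~\eqref{eq:vdot_ideal}, obtained from the storage function~\eqref{eq:lyap} through the cancellation of the cross terms $\mp\boldsymbol{\Psi}\tp\bm{\Phi}\tp\betat$. For the practical nonlinear PD ($\bm{\beta}\approx\bm{0}$) I will instead use only the first term of~\eqref{eq:lyap}, $V=\frac{\bar{m}}{\nu}\sum_i\log\cosh(\nu s_i)$. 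Differentiating it along $\bar{m}\dot{\s}=-k_d\boldsymbol{\Psi}+\bm{\varepsilon}$, with $\bm{\varepsilon}$ absorbing the uncompensated dynamics, gives the same expression. In both cases $V\ge 0$ and $V(0)$ is finite.

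Next I will apply Young's inequality~\eqref{eq:young_psi} to the indefinite term $\boldsymbol{\Psi}\tp\bm{\varepsilon}$, which gives~\eqref{eq:vdot_psi_bound}. Under the hypothesis $k_d>1/(2\kappa^2)$ the coefficient $c$ in~\eqref{eq:kd_condition} is positive. Integrating over $[0,T]$ and discarding $V(T)\ge0$ yields
\begin{equation*}
\int_0^T\norm{\boldsymbol{\Psi}}^2dt\le \frac{\kappa^2}{2c}\int_0^T\norm{\bm{\varepsilon}}^2dt+\frac{V(0)}{c}.
\end{equation*}
This is the standard finite-gain $\mathcal{L}_2$ inequality with gain $\gamma^2=\kappa^2/(2c)$ and bias term $V(0)/c$. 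Taking square roots and using $\sqrt{a+b}\le\sqrt a+\sqrt b$ then gives
\begin{equation*}
\norm{\boldsymbol{\Psi}}_{\mathcal{L}_2[0,T]}\le\frac{\kappa}{\sqrt{2c}}\norm{\bm{\varepsilon}}_{\mathcal{L}_2[0,T]}+\sqrt{V(0)/c}.
\end{equation*}
Since this holds uniformly in $T$, it holds for all $T$ and in the limit $T\to\infty$, which establishes the claimed gain bound.

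The main obstacle is justifying the practical PD case rigorously. There $\bm{\beta}\approx\bm{0}$ only approximately, after a transient of time constant $1/\gamma_0$, and the residual $\bm{\varepsilon}$ is only ultimately bounded rather than square integrable. I will handle this by stating the gain as a property of the truncated norms, so that no integrability of $\bm{\varepsilon}$ is needed, and by absorbing the transient mismatch $\bm{\Phi}\tp\bm{\beta}$ into $\bm{\varepsilon}$. This is legitimate because Theorem~\ref{thm:beta_bound} keeps $\bm{\beta}$ bounded and $\bm{\Phi}$ is bounded. With that absorption the dissipation identity holds exactly with the enlarged residual.

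A secondary check concerns the sign convention of $\boldsymbol{\varpi}$. It appears with a minus sign in~\eqref{eq:approx} but with a plus sign in the PD-case description of $\bm{\varepsilon}$. This discrepancy does not affect the argument, because only a bound on $\norm{\bm{\varepsilon}}$ enters the inequality.
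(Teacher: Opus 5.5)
Your proposal is correct and follows the same route as the paper: the dissipation identity \eqref{eq:vdot_hinf_psi}, Young's inequality \eqref{eq:young_psi}, integration using $V\geq 0$ to reach \eqref{eq:psi_l2_bound}, and division by $c$ to read off the gain $\kappa/\sqrt{2c}$, with initial-condition bias $V(0)/c$. Your handling of the practical-PD case (keeping only the $\log\cosh$ part of $V$, absorbing $\bm{\Phi}\tp(\bm{\beta}^*-\bm{\beta})$ into $\bm{\varepsilon}$, and working with truncated norms) makes explicit what the paper leaves implicit, and your remark on the sign of $\boldsymbol{\varpi}$ correctly flags a harmless typo in the paper.
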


\begin{remark}[From $\boldsymbol{\Psi}$ to $\qtil$]
Theorem~\ref{thm:hinf} bounds the $\mathcal{L}_2$ gain
to $\boldsymbol{\Psi}(\s)$, not directly to the
tracking error $\qtil$. The two are connected as
follows. First, in the non-saturated region
$|\Psi_i| \leq \bar{\psi} < 1$, the inverse relation
$s_i = \frac{1}{\nu}\operatorname{artanh}(\Psi_i)$
ensures that $\s$ remains uniformly bounded.
Second, the actuator tracking error satisfies
$\dot{\qtil} + \alpha\qtil = \s$, so $\qtil$ is
obtained by filtering $\s$ through a stable first-order
system with gain $1/\alpha$. Finally, in the local linearized
regime, where
$\boldsymbol{\Psi}(\s) \approx \nu\s$, composing these
bounds gives an end-to-end $\mathcal{L}_2$ gain from
$\bm{\varepsilon}$ to $\qtil$, as
\begin{equation}\label{eq:q_gain_from_eps}
  \gamma_{\varepsilon \to q}
  \leq
  \frac{\kappa}{\sqrt{2c}\,\alpha\nu}\,.
\end{equation}
This quantity is the tunable performance certificate of
the controller, where increasing $k_d$ tightens the
attenuation, while $\alpha$ and $\nu$ set the
convergence rate and linear amplification range,
respectively. 
A nonzero
$\qtil(t_1)$ contributes an additional transient, which is
independent of $\bm{\varepsilon}$ and vanishes
exponentially at rate $\alpha$.
\end{remark}

\begin{remark}[Supervisory threshold selection]
The threshold $s_c$ in~\eqref{eq:superv} should be
chosen so that the adaptation remains active throughout
the non-saturated region where
Theorem~\ref{thm:hinf} applies. Selecting $s_c > 1/\nu$
ensures that the RBF update is active over the range
where $\tanh(\nu s_i)$ transitions from linear behavior
toward saturation.
\end{remark}

\begin{remark}[Extension to thrust-direction space]
Since $\delta\q \approx \mathcal{J}\,\delta\lam$ in the
neighborhood of Proposition~\ref{prop:jinv}, the
actuator-space tracking bound extends to the thrust
direction as
$$\int\norm{\lamtil}^2\,dt \lesssim
\sigma_{\min}^{-2}(\mathcal{J})
\int\norm{\qtil}^2\,dt\,,$$
where $\sigma_{\min}(\mathcal{J})$ is the minimum
singular value of the actuator Jacobian. Thus, reducing
the sustained actuator error directly improves pointing
accuracy.
\end{remark}

\section{Experimental Results}\label{sec:experiments}

\subsection{Hardware Setup}

The experimental platform consists of two PA-HD2-4-2000-HS linear
actuators with quadrature encoders, driven by BTS7960 H-bridges and
controlled at 500~Hz through a Teensy~4.1/Raspberry~Pi~4 architecture.
The geometry is $h_1=33.7$\,cm, $h_2=24.2$\,cm, and
$r_b=r_t=15.7$\,cm. A Levant differentiator~\cite{levant2003} estimates
velocity from encoder position.
Safety limits enforce $\pm 5.08$\,cm stroke and $45^\circ$ total tilt.
The supervisory threshold is set to
$s_c = 12.7$\,cm\,s$^{-1}$, satisfying the condition
$s_c > 1/\nu$ from Section~\ref{sec:stability}.

\begin{remark}
The scalar $\bar{m}$ appears in the Lyapunov
function~\eqref{eq:lyap} as an analysis parameter; it
does not enter the control law~\eqref{eq:control} or the
adaptation law~\eqref{eq:adapt_sigma} and is never
computed or estimated online.
\end{remark}

\subsection{Startup Ramp}

To avoid an impulsive control action at startup, the
effective error is ramped as
$\qtil_{\mathrm{eff}}(t) = \tanh^2(\vartheta t)\,\qtil(t)$,
with $\vartheta > 0$ large enough that
$\qtil_{\mathrm{eff}}$ and its first derivative both
vanish at $t = 0$ and recover the true error within a
few seconds.

\subsection{Reference Trajectory}

The desired thrust vector traces a circle on the unit
sphere:
\begin{equation}\label{eq:ref}
  \lamdes(t) =
  \begin{bmatrix}
    \sin\omega_0\cos(\omega_1 t) \\
    \sin\omega_0\sin(\omega_1 t) \\
    \cos\omega_0
  \end{bmatrix},
\end{equation}
with $\omega_0 = 10^\circ$ and
$\omega_1 = 0.2$\,rad/s, over a 60\,s horizon. All
controllers are evaluated under identical conditions
from the same home position.

\subsection{Controller Comparison}

Four controllers are compared: a PID
with anti-windup, a super-twisting algorithm (STA),
the proposed PIDNet with adaptation disabled
($\bm{\Gamma} = \bm{0}$), and the PIDNet with
online adaptation. The PIDNet parameters are:
nonlinear PD gain $k_d = 0.9$, convergence rate
$\alpha = 10$, linear amplification
$\nu = 2.31\,\text{s\,cm}^{-1}$,
sigma-modification $\gamma_0 = 40$, learning rates
$\bm{\Gamma} = \mathrm{diag}(10^{-5},\; 0.5,\; 0.2,\;
0.07,\; 0.2,\; 0.07)$, RBF centers at
$\bm{c}_0 = \bm{0}$ and
$\bm{c}_{1\text{--}4} = (\pm 0.51,\,\pm 0.51)$\,cm
with kernel width $\sigma = 0.51$\,cm. Same-sign
quadrants receive higher learning rates to capture
direction-dependent stiction compensation. The bias
channel learning rate is set to $10^{-5}$ to suppress
drift from persistent activation.

The PID gains ($K_p = 5{\times}10^{-3}$,
$K_i = 2{\times}10^{-4}$,
$K_d = 2{\times}10^{-4}$) were tuned by manual
iteration, as classical rules are inapplicable to the nonlinear
coupled plant; anti-windup at $K_a = 0.5 K_i$ was added to mitigate
integrator saturation.
The STA gains
($k_1 = 0.02$, $k_2 = 0.015$, $\alpha = 10$) were
selected to minimize chattering while maintaining
convergence. In both cases, tuning demanded multiple
experimental runs with no systematic procedure. By
contrast, PIDNet stability requires only
$k_d > 1/(2\kappa^2)$
from~\eqref{eq:kd_condition}, with the learning rates
$\bm{\Gamma}$ affecting only adaptation speed, not
stability. The PID and STA tuning
process resulted in multiple driver failures due to
sustained actuator saturation, underscoring the
practical value of the bounded control structure
in~\eqref{eq:control}.

\subsection{Results}

Performance is quantified by the following indices:
\begin{align*}
  \mathrm{ISE}  &= \textstyle\int_0^T\!\norm{\qtil}^2\,d\tau\,,\\
  \mathrm{ITNE} &= \textstyle\int_0^T\! t\,\norm{\qtil}\,d\tau\,,\\
  \mathrm{ISC}  &= \textstyle\int_0^T\!\norm{\bm{u}}^2\,d\tau\,.
\end{align*}

The steady-state tracking performance is characterized by
$\bar{e}_{\mathrm{ss}} = \frac{1}{T-t_0}\int_{t_0}^{T}\norm{\qtil}\,d\tau$
and
$e_{\max} = \max_{t \geq t_0}\norm{\qtil(t)}$,
with $t_0 = 20$\,s to exclude the initial transient.

\begin{table}[t]
\centering
\caption{Performance Index Comparison (Circle, 60\,s). Steady-state
metrics $\bar{e}_{\mathrm{ss}}$ and $e_{\max}$ computed for
$t \geq 20$\,s.}
\label{tab:results}
\begin{tabular}{lccccc}
\hline
\textbf{Controller} & \textbf{ISE} & \textbf{ITNE}
  & \textbf{ISC} & $\bar{e}_{\mathrm{ss}}$ & $e_{\max}$ \\
  & [cm$^2$s] & [cm\,s$^2$] & [pwm$^2$s] & [cm] & [cm] \\
\hline
PID              & 3.27  & 212.1 & 55.5 & 0.108 & 0.121 \\
STA              & 2.11  & 189.3 & 52.0 & 0.104 & 0.113 \\
PIDNet (off)     & 1.82  & 126.0 & 51.5 & 0.068 & 0.080 \\
PIDNet (on)      & 1.66  & 113.0 & 51.6 & 0.061 & 0.072 \\
\hline
\end{tabular}
\end{table}

The results exhibit two distinct layers of improvement.
The first one is structural: the bounded
nonlinear PD action $\boldsymbol{\Psi}(\s)$ forces the
system onto the error manifold $\s = \bm{0}$ more
effectively than either the PID or STA baselines,
reducing the ITNE by 47\% and 40\%, respectively, with
no increase in control effort.

The second layer is the adaptive compensation. The
supervisory gate~\eqref{eq:superv} activates the RBF
network only when $|s_i| < s_c$, restricting adaptation
to the neighborhood of the error manifold where the
basis functions provide valid approximation. 
The network selectively accumulates corrections
for stiction, backlash, and direction-dependent friction
that a fixed-gain controller cannot resolve, reducing
the ITNE by a further 10\% and the worst-case
steady-state error from 0.080\,cm to 0.072\,cm. The ISC
is unchanged, confirming that the improvement is
achieved through targeted compensation rather than
increased actuation effort.

Although the adaptive improvement is modest in absolute laboratory units,
it is meaningful for TVC because actuator tracking error maps through
$\mathcal{J}^{-1}$ to thrust-direction error. Thus, reducing sustained
actuator error directly improves pointing accuracy.

The PID controller saturates at the actuator limits for
14.5\% of the run (8.7\,s), predominantly during transients
and direction reversals; the STA saturates for 5.2\%
(3.1\,s). In both cases, saturation invalidates the
nominal stability analysis and requires auxiliary anti-windup
mechanisms, whose interaction with the plant dynamics is
difficult to certify. By contrast, the PIDNet control
signal never exceeds $|u_i| = 0.95$: the $\tanh$ nonlinearity
guarantees $\|\bm u\| \lesssim k_d$ by construction, so the Lyapunov
analysis holds without modification throughout the
entire operation.

Fig.~\ref{fig:beta} shows the RBF weight evolution during the
PIDNet~(on) run. The per-weight trajectories (left) confirm that the
origin-centered kernel $\beta_1$ dominates, capturing near-equilibrium
compensation, while the quadrant weights $\beta_2$--$\beta_5$ provide
symmetric direction-dependent corrections. The bias weight $\beta_0$
remains effectively frozen at the suppressed learning rate
$\gamma_1 = 10^{-5}$. The weight norm (right) remains uniformly bounded,
consistent with Theorem~\ref{thm:beta_bound}.

\begin{figure}[t]
  \centering
  \includegraphics[width=0.48\columnwidth]{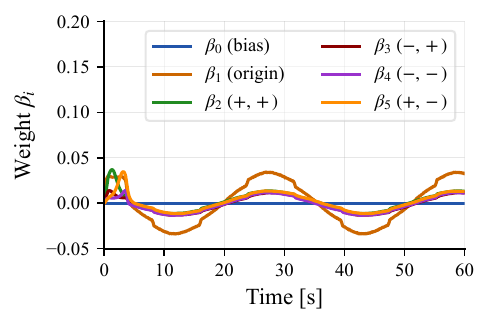}%
  \hfill
  \includegraphics[width=0.48\columnwidth]{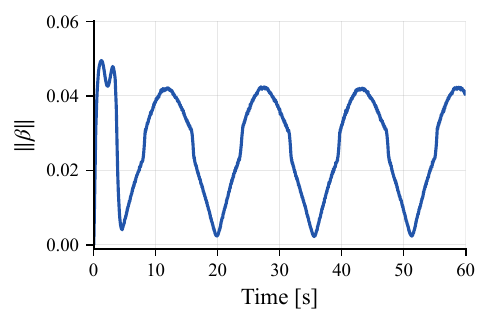}
  \caption{Weight evolution under PIDNet with adaptation.
  Left: per-weight trajectories $\beta_i(t)$.
  Right: weight norm $\norm{\bm{\beta}(t)}$.}
  \label{fig:beta}
\end{figure}

\begin{figure}[t]
  \centering
  \includegraphics[width=0.65\columnwidth]{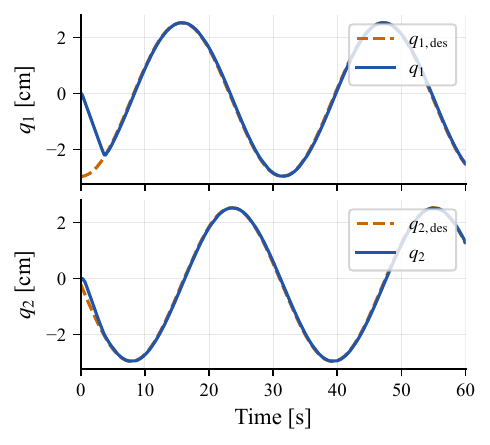}
  \caption{Actuator tracking under PIDNet with adaptation:
  reference (dashed) and measured extensions for both channels.}
  \label{fig:tracking}
\end{figure}

\begin{figure}[t]
  \centering
  \includegraphics[width=0.65\columnwidth]{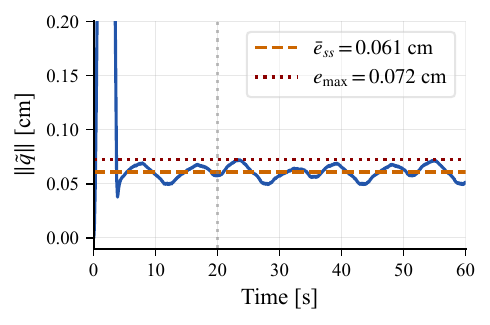}
  \caption{Tracking error norm $\norm{\qtil}$ for all four controllers.}
  \label{fig:error}
\end{figure}

\begin{figure}[t]
  \centering
  \includegraphics[width=0.65\columnwidth]{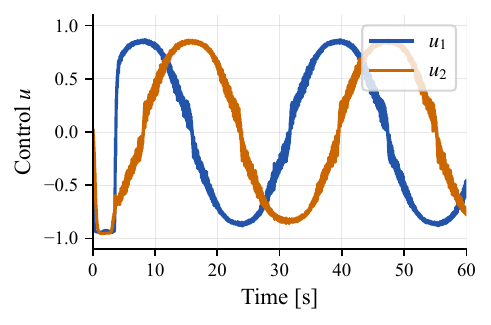}
  \caption{Control signals $u_1$, $u_2$ under PIDNet with adaptation.
  The $\tanh$ nonlinearity ensures bounded actuation throughout
  the maneuver.}
  \label{fig:control}
\end{figure}

\section{Conclusion}

This paper presented a model-free controller for a
parallel thrust vectoring mechanism that guarantees
bounded control signals by construction, eliminating the
actuator saturation events that plagued the PID and
super-twisting baselines during experimental tuning.
Stability is established without requiring explicit
knowledge of the plant dynamics: asymptotic convergence
under ideal approximation, bounded weight evolution
under persistent error via sigma-modification, and a
quantitative $\mathcal{H}_\infty$ gain bound from the
approximation residual to the actuator tracking error,
extending to the thrust-direction error through the
actuator Jacobian.

Experimental validation on a two-degree-of-freedom
electromechanical TVC rig demonstrates that the proposed
PIDNet reduces the ITNE by 47\% relative to PID and
40\% relative to a super-twisting baseline, while
tightening the worst-case steady-state error to
0.072\,cm at no additional control effort. Disabling the
adaptive layer isolates the two contributions: the
bounded nonlinear PD provides the dominant robustness
margin, while the coupled MIMO RBF network compensates
direction-dependent friction and geometric coupling that
no fixed-gain single-axis design can resolve, improving
accuracy by a further 10\%.

The controller requires only actuator-displacement
measurements and a precomputed reference map
$\qdes(\lamdes)$, with no online computation of the
Jacobian, inertia, or Coriolis matrices. No model was
identified, no regressor was constructed, and no driver
was damaged during the experimental campaign.

\bibliographystyle{asmejour}
\bibliography{Refs}




\end{document}